\DeclarePairedDelimiter{\floor}{\lfloor}{\rfloor}
\DeclarePairedDelimiter{\norm}{\lVert}{\rVert}
\DeclarePairedDelimiter{\abs}{\lvert}{\rvert}
\begin{document}

\title{A Quantum Approach to the Discretizable Molecular Distance Geometry Problem
}


\author{Carlile Lavor         \and
        Franklin Marquezino   \and
        Andr\^es Oliveira     \and
        Renato Portugal
}


\institute{C. Lavor \at
              University of Campinas, IMECC, Unicamp, 13081-970, Campinas, SP, Brazil\\
              \email{clavor@ime.unicamp.br}           
           \and
		F. Marquezino \at 
		Federal University of Rio de Janeiro,
		21941-598, Rio de Janeiro, RJ, Brazil\\
		\email{franklin@cos.ufrj.br}
		\and
        A. Oliveira \at 
		University of Campinas, IMECC, UNICAMP,
		13081-970, Campinas, SP, Brazil \\
		\email{andresroliveira98@gmail.com}
        \and
        R. Portugal \at
        National Laboratory of Scientific Computing (LNCC),
		25651-075, Petr\'{o}polis, RJ, Brazil\\
		\email{portugal@lncc.br}
}

\date{Received: date / Accepted: date}

\maketitle


\begin{abstract}
The Discretizable Molecular Distance Geometry Problem (DMDGP) aims to determine the three-dimensional protein structure using distance information from nuclear magnetic resonance experiments. The DMDGP has a finite number of candidate solutions and can be solved by combinatorial methods. We describe a quantum approach to the DMDGP by using Grover's algorithm with an appropriate oracle function, which is more efficient than classical methods that use brute force. We show computational results by implementing our scheme on IBM quantum computers with a small number of noisy qubits.
\keywords{Distance Geometry \and Quantum Computing \and Grover's Algorithm}
\end{abstract}

\section{Introduction}
The calculation of the 3D structure of a molecule is a fundamental problem for understanding the molecule function, which is particularly true for proteins~\cite{donald11}. X-ray crystallography was the first method applied to this problem, considering crystallized proteins. For cases where the crystallization is not possible, there is another technique, called Nuclear Magnetic Resonance (NMR)~\cite{wuthrich89}, where (short) Euclidean distances between atoms in a protein are measured. The 3D protein structure determination using this partial distance information~\cite{crippen88} can be modelled as the DMDGP, which has been addressed from a combinatorial approach in~\cite{billinge16,billinge18,lavor17,liberti14a,liberti16,liberti17,mucherino13}.

Quantum computing promises to speed up many important computational tasks, such as searching unsorted databases~\cite{grover97}. Recently, the quantum supremacy has been established by Google~\cite{Arute_2019} and Chinese experiments~\cite{Yulin_Wu_2021}, which means that programmable quantum computers can already execute tasks that classical computers cannot reproduce in a feasible amount of time. The main problem to deliver practical results is the accumulation of noise during the computation and the term NISQ computers~\cite{Pre18} has been used to classify the quantum computers that we are going to use in the near future. 

An attempt to apply quantum computing to the DMDGP was presented in~\cite{lavor05} by exploring the fact that the DMDGP has a finite number of candidate solutions. Grover's algorithm is used to find the solution among the candidates by exhaustive search. 

In this paper, we present a new way of applying Grover's algorithm to the DMDGP by improving the definition of the oracle function. We implement small instances of our algorithm on IBM Quantum with the goal of showing that the procedure works and can be used in practical applications when quantum computers with reasonable size and small error rates are available.

The structure of the paper is as follows. Section~2 defines the DMDGP and presents the combinatorial approach to this problem. Section~3 describes how Grover's algorithm can be used in the DMDGP, which is the main contribution of this paper. Section~4 presents the computational experiments that were implemented on IBM Quantum. Section~5 provides the conclusions and final comments.

\section{The Discretizable Molecular Distance Geometry Problem}

To model our problem, we use a weighted simple undirected graph\footnote{The origin of the word graph is related to representation of molecules and
this relationship is probably the deepest existing between chemistry and
discrete mathematics~\cite{sylvester77}.} $G=(V,E,d)$, where $V$ represents the set of atoms and $E$ represents the set of atom pairs for which a distance is available, given by the function $d:E\rightarrow \lbrack 0,\infty )$.

The problem is then to find a function $x:V\rightarrow \mathbb{R}^{3}$ that
associates each element of $V$ with a point in $\mathbb{R}^{3}$ in such a
way that the Euclidean distances between the points correspond to the values
given by $d$. This is called a Distance Geometry Problem (DGP) in $\mathbb{%
R}^{3}$~\cite{billinge16,billinge18,lavor17,liberti14a,liberti16,liberti17,mucherino13}, formally given as follows:
\begin{problem} (DGP)
Given a simple undirected graph $G=(V,E,d)$ whose edges are weighted by a
function $d:E\rightarrow \lbrack 0,\infty )$, find a function $x\colon
V\rightarrow \mathbb{R}^{3}$ such that%
\begin{equation}
\norm*{x_{u}-x_{v}}=d_{uv},\,\,\,\forall \{u,v\}\in E,  \label{DGP}
\end{equation}%
where $x_{u}=x(u)$, $x_{v}=x(v)$, $d_{uv}=d\left(\left\{u,v\right\}\right)$, and $\norm*{x_{u}-x_{v}}$
is the Euclidean distance between $x_{u}$ and $x_{v}$.
\end{problem}

There is evidence that a closed-form solution for solving (\ref{DGP}) is not
possible~\cite{bajaj88} and a common approach is to formulate the DGP as a
nonlinear global minimization problem~\cite{lara14}, 
\begin{equation*}
\underset{x_{1},\dots,x_{n}\in \mathbb{R}^{3}}{\min }\sum_{\{u,v\}\in
E}\left(\norm*{x_{u}-x_{v}}^{2}-d_{uv}^{2}\right)^{2},
\end{equation*}%
where $|V|=n$. In~\cite{lavor06}, some global optimization algorithms have
been tested but none of them scale well to medium or large instances. A
survey on different methods to the DGP is given in~\cite{liberti10}.

The information provided by NMR experiments and geometric properties of
proteins allow us to define a DGP class, called the \textit{Discretizable
Molecular Distance Geometry Problem }(DMDGP)~\cite{lavor12b}, defined below,
where a combinatorial approach can be applied to the problem~\cite{cassioli15,lavor12c,malliavin19}.
\begin{problem} (DMDGP)
Given a DGP graph $G=(V,E,d)$ and a vertex order $v_{1},\dots,v_{n}$ such that
\begin{itemize}
\item[1.] $v_{1},v_{2},v_{3}$ can be fixed in $\mathbb{R}^{3}$ satisfying (\ref%
{DGP});
\item[2.] $\forall i>3$, the set $\left\{v_{i-3},v_{i-2},v_{i-1},v_{i}\right\}$ is a clique
with%
\begin{equation*}
d_{i-3,i-2}+d_{i-2,i-1}>d_{i-3,i-1},
\end{equation*}
\end{itemize}
find a function $x\colon V\rightarrow \mathbb{R}^{3}$ such that
\begin{equation*}
\norm*{x_{u}-x_{v}}=d_{uv},\,\,\,\forall \{u,v\}\in E.
\end{equation*}
\end{problem}

We list some important observations about this problem: (1)~the DMDGP vertex order is the main idea behind the discrete version of the
DGP (for instance, see~\cite{cassioli13,lavor12a,lavor19a,lavor19b}); (2)~the positions for $v_{1},v_{2},v_{3}$ guarantee that the solution set will contain just incongruent solutions (aside from a single reflection) and the strictness of the triangular inequality prevents an uncountable quantity of solutions~\cite{lavor12b};
(3)~an exact solution method, called
Branch-and-Prune (BP), was presented in~\cite{carvalho08,liberti08} for finding all incongruent solutions (that is, solutions obtained modulo rotations and translations).
The BP algorithm can be exponential in the worst case, which is consistent
with the fact that the DMDGP is an NP-hard problem~\cite{lavor12b}.

The DMDGP order \textquotedblleft organizes\textquotedblright\ the search
space in a binary tree and the additional distance information (related to
the pairs $(v_{j},v_{i})$, $j<i-3$, \thinspace $i=5,\dots,n$) can be used to
reduce the search space by pruning infeasible positions in the tree, which
begins with the fixed positions for $v_{1},v_{2},v_{3}$. The search ends
when a path from the root of the tree to a leaf node is found by the BP
algorithm, such that the positions relative to vertices in the path satisfy
the DGP equations (\ref{DGP}).

There are symmetries\textit{\ }in the BP tree~\cite{gonalves21,lavor21,liberti14b,mucherino12},
related to the cardinality of the DMDGP solution set, which can be computed
even before applying BP algorithm. These symmetric properties are based on
the set

\begin{equation*}
S=\{v\in V:\nexists \{u,w\}\in E\;\text{such that }u+3<v\leq w\},
\end{equation*}%
for a given DMDGP instance $G=(V,E,d)$~\cite{liberti14b} (we denote by $u+3$
the third vertex after $u$). 

To illustrate the importance of these symmetries, let us consider a small DMDGP
instance given by%
\begin{eqnarray*}
V &=&\{v_{1},v_{2},v_{3},v_{4},v_{5},v_{6},v_{7}\}, \\
E &=&\{\{v_{1},v_{2}\},\{v_{1},v_{3}\},\{v_{1},v_{4}\},\{v_{1},v_{6}\}, \\
&&\{v_{2},v_{3}\},\{v_{2},v_{4}\},\{v_{2},v_{5}\}, 
\{v_{3},v_{4}\},\{v_{3},v_{5}\},\{v_{3},v_{6}\}, \\
&&\{v_{4},v_{5}\},\{v_{4},v_{6}\},\{v_{4},v_{7}\}, 
\{v_{5},v_{6}\},\{v_{5},v_{7}\},\{v_{6},v_{7}\}\}.
\end{eqnarray*}%
It is easy to check that, for this example,%
\begin{equation*}
S=\{v_{4},v_{7}\}.
\end{equation*}

To simplify the notation, let us represent the first solution found by BP by
a sequence of zeros and ones (remember that the positions for $%
v_{1},v_{2},v_{3}$ are fixed): 
\begin{equation*}
s_{1}=(0,1,0,1).
\end{equation*}%
The fact that $v_{7}\in S$ implies that another solution is given by 
\begin{equation*}
s_{2}=(0,1,0,0),
\end{equation*}%
and from the fact that $v_{4}\in S$, more two solutions are given by%
\begin{equation*}
s_{3}=(1,0,1,0)
\end{equation*}%
and 
\begin{equation*}
s_{4}=(1,0,1,1).
\end{equation*}

In addition to be possible to generate all the other solutions from just
one, the set $S$ also informs us that the cardinality of the solution set is
given by $2^{|S|}$~\cite{liberti13a}. Thus, we can use Grover's algorithm to find just one solution.

\section{Solving the DMDGP by Grover's algorithm}
This section briefly describes the Grover's algorithm and shows how it can
be used to solve the DMDGP.

\subsection{Grover's algorithm}
Grover's algorithm~\cite{grover97} searches a list of elements labeled by integers from $0$ to $N-1$, where $N=2^{n}$ for some integer $n\geq 2$, for a particular element $i_0$. The algorithm uses two registers whose number of qubits are $n$ and $1$, respectively. The first step creates a superposition of all $2^{n}$ computational basis states $%
\left\{\left\vert 0\right\rangle ,...,\left\vert 2^{n}-1\right\rangle\right\}$ in the
first register, which is achieved by applying the Hadamard operator $H$, 
\begin{equation*}
H=\frac{1}{\sqrt{2}}\left[ 
\begin{array}{rr}
1 & 1 \\ 
1 & -1%
\end{array}%
\right] ,
\end{equation*}%
on each qubit in state $\left\vert 0\right\rangle $,
resulting in
\begin{equation}\label{eq:psi}
|\psi \rangle =\frac{1}{\sqrt{2^{n}}}\sum_{i=0}^{2^{n}-1}\left\vert
i\right\rangle.
\end{equation}
The second register is initialized in state $\left\vert 1\right\rangle $
and, after applying again the Hadamard operator, it changes to the state%
\begin{equation*}
\left\vert -\right\rangle =H|1\rangle =\frac{|0\rangle -|1\rangle }{\sqrt{2}}.
\end{equation*}

Grover's algorithm uses two unitary operators. The first is 
\begin{equation}
U_{f} \left\vert i\right\rangle \left\vert j\right\rangle 
=\left\vert i\right\rangle \left\vert j\oplus f(i)\right\rangle ,
\end{equation}%
where $i\in $ $\{0,\dots,N-1\}$, $j\in $ $\{0,1\}$, $\oplus $ is the sum
modulo 2, and $f:\{0,\dots,N-1\}\rightarrow \{0,1\}$ is a function, called oracle, that
recognizes the searched element $i_{0}$ ($f(i)=1$ if and only if $i=i_{0}$).
Using that $1\oplus f(i_{0})=0$ and $1\oplus f(i)=1$, for all $i\neq
i_{0}$, we obtain%
\begin{equation}
U_{f} \left\vert i\right\rangle \left\vert -\right\rangle 
=(-1)^{f(i)}\left\vert i\right\rangle \left\vert -\right\rangle ,  \label{f}
\end{equation}%
implying that the state of the first register $\left\vert \psi _{1}\right\rangle$ after applying $U_f$ is
\begin{eqnarray*}
\left\vert \psi _{1}\right\rangle   &=&
\frac{1}{\sqrt{N}}\sum_{i=0}^{N-1}(-1)^{f(i)}|i\rangle .
\end{eqnarray*}
The state of the searched element $i_{0}$ is different from the remaining ones because it is the only one with negative amplitude, but
this piece of information is not useful---the probability of finding $i_0$ after a measurement is equal to any other wrong element.

The second unitary operator increases the amplitude of $i_{0}$ and at the same time uniformly decreases the amplitude of the other elements. Its definition is
\begin{equation*}
G= 2\left\vert \psi \right\rangle \left\langle \psi \right\vert
-I,
\end{equation*}%
where $|\psi \rangle$ is given by $(\ref{eq:psi})$ and $I$ is the identity operator. By applying $G$ on the incumbent state $\left\vert \psi _{1}\right\rangle $, we obtain
\begin{eqnarray*}
\left\vert \psi _{G}\right\rangle &=& \sum_{\substack{ i=0\\ i\neq i_{0}}}^{N-1}\frac{N-4}{N\sqrt{N}}|i\rangle 
+\frac{3N-4}{N\sqrt{N}}|i_{0}\rangle .
\end{eqnarray*}

The composition of the two operators is the evolution operator $U=(G\otimes I_2) U_f$.
The action of $U^{k}$ $(k\in \mathbb{N})$ rotates $|\psi \rangle|- \rangle $ towards $|i_{0}\rangle|- \rangle $ by $k\theta $
degrees, in the subspace spanned by $|\psi \rangle|- \rangle $ and $|i_{0}\rangle|- \rangle $,
where $\theta $ is the angle between $|\psi \rangle $ and $U|\psi \rangle $~\cite{lavor07}.
The number of times $k$ that $U$ must be applied so
that the angle between $|i_{0}\rangle|- \rangle $ and $U^{k}|\psi \rangle |- \rangle$ becomes
zero is given by 
\begin{equation*}
k=\frac{\arccos \left( \frac{1}{\sqrt{N}}\right) }{\arccos \left( \frac{N-2}{%
N}\right) }.
\end{equation*}%
When $N$ is large, we have 
\begin{equation*}
\lim_{N\rightarrow \infty }\frac{k}{\sqrt{N}}=\frac{\pi }{4}.
\end{equation*}
Thus, applying $U$ exactly $\lfloor k\rfloor $ times and
measuring the first register, we obtain $i_{0}$ with probability $O(1)$
in $O(\sqrt{N})$ steps~\cite{marquezino19,portugal18}. Grover's algorithm can be extended to query databases with repeated elements~\cite{nielsen00}.

When we apply Grover's algorithm to practical problems, we have to deal with the problem of defining an efficient oracle function. It is not a trivial task to find an efficient oracle function for the DMDGP.

\subsection{The oracle function for the DMDGP}
From the DMDGP definition, all distance values between vertices in the set $%
\{v_{i-3},v_{i-2},v_{i-1},v_{i}\}$, for $i=4,...,n\,$, allow us to acquire the following pieces of information:
\begin{itemize}
\item[1.] $d_{1,2},\ldots ,d_{n-1,n}$ (distances associated to consecutive vertices);

\item[2.] $\theta _{1,3},\ldots ,\theta _{n-2,n}$ (angles in $(0,\pi )$ defined by three consecutive vertices); and

\item[3.] $\cos (\omega _{1,4}),\ldots ,\cos (\omega _{n-3,n})$ (cosines of torsion angles in $[0,2\pi ]$ defined by four consecutive vertices~\cite{lavor15}), given by
{\scriptsize
\begin{equation*}
\cos (\omega _{i-3,i})=\frac{%
2d_{i-2,i-1}^{2}(d_{i-3,i-2}^{2}+d_{i-2,i}^{2}-d_{i-3,i}^{2})-(d_{i-3,i-2,i-1})(d_{i-2,i-1,i})%
}{\sqrt{4d_{i-3,i-2}^{2}d_{i-2,i-1}^{2}-(d_{i-3,i-2,i-1}^{2})}\sqrt{%
4d_{i-2,i-1}^{2}d_{i-2,i}^{2}-(d_{i-2,i-1,i}^{2})}},  
\end{equation*}}
where 
{\scriptsize
\begin{eqnarray*}
d_{i-3,i-2,i-1} &=&d_{i-3,i-2}^{2}+d_{i-2,i-1}^{2}-d_{i-3,i-1}^{2} \\
d_{i-2,i-1,i} &=&d_{i-2,i-1}^{2}+d_{i-2,i}^{2}-d_{i-1,i}^{2}.
\end{eqnarray*}}
\end{itemize}

From $\cos (\omega _{i-3,i})$, for $i=4,...,n$, we obtain two possible
values for each torsion angle $\omega _{i-3,i}$; and considering that the
DMDGP order $v_{1},\dots,v_{n}$ represents bonded atoms of a molecule, the
values $d_{i-1,i}$, $\theta _{i-2,i}$ and $\omega _{i-3,i}$ are exactly the 
\textit{internal coordinates} of the molecule that can also be used to describe its
structure~\cite{lavor12b}. This means that the 3D molecular structure is
defined by choosing $+$ or $-$ from $\sin (\omega _{i-3,i})=\pm \sqrt{1-\cos
^{2}(\omega _{i-3,i})}$, for $i=4,\ldots ,n$ (the signs $+$ and $-$ are
associated to the branches of the BP tree).

The atomic Cartesian coordinates $x_{i}=(x_{i_{1}},x_{i_{2}},x_{i_{3}})^{T}%
\in \mathbb{R}^{3}$ can be obtained from the internal coordinates using the
following matrix multiplications~\cite{lavor12b}:

\begin{equation}
\left[ 
\begin{array}{r}
x_{i_{1}} \\ 
x_{i_{2}} \\ 
x_{i_{3}} \\ 
1%
\end{array}%
\right] =B_{1}B_{2}\cdots B_{i}\left[ 
\begin{array}{r}
0 \\ 
0 \\ 
0 \\ 
1%
\end{array}%
\right] ,\text{ }\forall i=1,\ldots ,n,  \label{xi}
\end{equation}%
where 
\begin{equation*}
B_{1}=\left[ 
\begin{array}{rrrr}
1 & 0 & 0 & 0 \\ 
0 & 1 & 0 & 0 \\ 
0 & 0 & 1 & 0 \\ 
0 & 0 & 0 & 1%
\end{array}%
\right] ,\text{ }B_{2}=\left[ 
\begin{array}{rrrr}
-1 & 0 & 0 & -d_{1,2} \\ 
0 & 1 & 0 & 0 \\ 
0 & 0 & -1 & 0 \\ 
0 & 0 & 0 & 1%
\end{array}%
\right] ,
\end{equation*}%
\begin{equation*}
B_{3}=\left[ 
\begin{array}{rrrr}
-\cos \theta _{1,3} & -\sin \theta _{1,3} & 0 & -d_{2,3}\cos \theta _{1,3}
\\ 
\sin \theta _{1,3} & -\cos \theta _{1,3} & 0 & d_{2,3}\sin \theta _{1,3} \\ 
0 & 0 & 1 & 0 \\ 
0 & 0 & 0 & 1%
\end{array}%
\right] ,
\end{equation*}%
and 
\begin{equation*}
B_{i}=\left[ 
\begin{array}{rrrr}
-\cos \theta _{i-2,i} & -\sin \theta _{i-2,i} & 0 & -d_{i-1,i}\cos \theta
_{i-2,i} \\ 
\sin \theta _{i-2,i}\cos \omega _{i-3,i} & -\cos \theta _{i-2,i}\cos \omega
_{i-3,i} & -\sin \omega _{i-3,i} & d_{i-1,i}\sin \theta _{i-2,i}\cos \omega
_{i-3,i} \\ 
\sin \theta _{i-2,i}\sin \omega _{i-3,i} & -\cos \theta _{i-2,i}\sin \omega
_{i-3,i} & \cos \omega _{i-3,i} & d_{i-1,i}\sin \theta _{i-2,i}\sin \omega
_{i-3,i} \\ 
0 & 0 & 0 & 1%
\end{array}%
\right] ,
\end{equation*}%
for $i=4,...,n$.

From these matrices, the first three atoms of the molecule can be fixed at
positions%
\begin{equation*}
x_{1}=\left[ 
\begin{array}{r}
0 \\ 
0 \\ 
0%
\end{array}%
\right] ,\text{ }x_{2}=\left[ 
\begin{array}{r}
-d_{1,2} \\ 
0 \\ 
0%
\end{array}%
\right] ,\text{ }x_{3}=\left[ 
\begin{array}{r}
-d_{1,2}+d_{2,3}\cos \theta _{1,3} \\ 
d_{2,3}\sin \theta _{1,3} \\ 
0%
\end{array}%
\right] ,
\end{equation*}%
implying that there are $N=2^{n-3}$ possible configurations for a molecule
with $n$ atoms related to the states $\left\vert 0\right\rangle
,...,\left\vert N-1\right\rangle $ of the first register of Grover's
algorithm.

For each qubit of the first register of Grover's algorithm, states $%
\left\vert 0\right\rangle $ and $\left\vert 1\right\rangle $ are associated
to $+\sqrt{1-\cos ^{2}(\omega _{i-3,i})}$ and $-\sqrt{1-\cos ^{2}(\omega
_{i-3,i})}$, respectively, for $i=4,...,n$.

For $n=6$, the possible states are 
\begin{eqnarray*}
\left\vert 0\right\rangle  &=&\left\vert 000\right\rangle \text{, }%
\left\vert 1\right\rangle =\left\vert 001\right\rangle \text{, }\left\vert
2\right\rangle =\left\vert 010\right\rangle \text{, }\left\vert
3\right\rangle =\left\vert 011\right\rangle \text{, } \\
\left\vert 4\right\rangle  &=&\left\vert 100\right\rangle \text{, }%
\left\vert 5\right\rangle =\left\vert 101\right\rangle \text{, }\left\vert
6\right\rangle =\left\vert 110\right\rangle \text{, }\left\vert
7\right\rangle =\left\vert 111\right\rangle \text{,}
\end{eqnarray*}%
where $\left\vert 5\right\rangle =\left\vert 101\right\rangle $, for
example, is represented by

\begin{equation*}
-\sqrt{1-\cos ^{2}(\omega _{1,4})}\text{, }+\sqrt{1-\cos ^{2}(\omega _{2,5})}%
\text{, }-\sqrt{1-\cos ^{2}(\omega _{3,6})}\text{.}
\end{equation*}

Given a candidate solution $\left\vert k\right\rangle $, $k=0,\dots,2^{n-3}-1$%
, we define a function%
\begin{equation*}
h:\{0,\dots,2^{n-3}-1\}\rightarrow \mathbb{R}^{3n}
\end{equation*}%
by%
\begin{equation}
h(k)=\left(x_{1}^{k},\dots,x_{n}^{k}\right),  \label{H}
\end{equation}%
with $x_{i}^{k}$ given by (\ref{xi}), for $i=1,\dots,n$.

For checking if $h(k)$, for $k=0,\dots,2^{n-3}-1$, is a DMDGP solution, we
define another function%
\begin{equation*}
g:\mathbb{R}^{3n}\rightarrow \mathbb{R}
\end{equation*}%
by%
\begin{equation}
g\left(x_{1}^{k},...,x_{n}^{k}\right)=\sum_{\{u,v\}\in
E}\left(\norm*{x_{u}^{k}-x_{v}^{k}}^{2}-d_{uv}^{2}\right)^{2},  \label{G}
\end{equation}%
where $G=(V,E,d)$ is a DMDGP instance. A DMDGP solution is obtained if, and
only if, $g\left(x_{1}^{k},...,x_{n}^{k}\right)=0$.

In order to define an oracle function $f$ for the Grover's algorithm, we
first prove that it is possible to define a parameter $p_{1}$, in terms of $n
$, such that 
\begin{equation*}
\frac{g(h(k))}{p_{1}}\in \lbrack 0,1],
\end{equation*}%
for all $k=0,...,2^{n-3}-1$.

\begin{proposition} \label{p1}
Given the function $g\circ h:\left\{0,\dots,2^{n-3}-1\right\}\rightarrow \mathbb{R}$,
where $h$ and $g$ are defined by (\ref{H}) and (\ref{G}), respectively, we
have that%
\begin{equation*}
\frac{g(h(k))}{6^{4}(n^{6}+n^{2})}\in \lbrack 0,1].
\end{equation*}
\end{proposition}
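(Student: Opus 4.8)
The plan is to prove the two bounds $0\le g(h(k))$ and $g(h(k))\le 6^{4}(n^{6}+n^{2})$ for every candidate $k\in\{0,\dots,2^{n-3}-1\}$; the first is immediate and the upper bound is where the work lies. For the lower bound, note that by $(\ref{G})$ the value $g(h(k))$ is a finite sum of squares of real numbers, hence nonnegative, so $g(h(k))/\bigl(6^{4}(n^{6}+n^{2})\bigr)\ge 0$ for every $k$.

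For the upper bound, the first step is to control $\norm*{x_{u}^{k}-x_{v}^{k}}$ by a quantity depending on $n$ only. Assume $u<v$ without loss of generality. The construction $(\ref{xi})$ produces a chain of points in which consecutive points are at the prescribed distance, that is $\norm*{x_{j}^{k}-x_{j+1}^{k}}=d_{j,j+1}$ for every $j$ and every candidate $k$; indeed, $d_{j,j+1}$ enters the matrix $B_{j+1}$ only through its last column, whose Euclidean norm equals $d_{j,j+1}$, and each $B_{i}$ is a rigid motion regardless of the sign chosen for $\sin(\omega_{i-3,i})$. Chaining the triangle inequality along $x_{u}^{k},x_{u+1}^{k},\dots,x_{v}^{k}$ gives
\begin{equation*}
\norm*{x_{u}^{k}-x_{v}^{k}}\le\sum_{j=u}^{v-1}\norm*{x_{j}^{k}-x_{j+1}^{k}}=\sum_{j=u}^{v-1}d_{j,j+1}.
\end{equation*}
Since each $\{v_{i-3},v_{i-2},v_{i-1},v_{i}\}$ is a clique, every consecutive pair $\{v_{j},v_{j+1}\}$ lies in $E$, and using that the NMR-derived distances obey $d_{uv}\le 6$ for all $\{u,v\}\in E$ we get $\norm*{x_{u}^{k}-x_{v}^{k}}\le 6(v-u)\le 6(n-1)<6n$, hence $\norm*{x_{u}^{k}-x_{v}^{k}}^{2}<6^{2}n^{2}$.

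The second step bounds each summand of $(\ref{G})$. Using $(a-b)^{2}\le a^{2}+b^{2}$ (valid for $a,b\ge 0$) with $a=\norm*{x_{u}^{k}-x_{v}^{k}}^{2}$ and $b=d_{uv}^{2}$, together with $\norm*{x_{u}^{k}-x_{v}^{k}}^{2}<6^{2}n^{2}$ and $d_{uv}^{2}\le 6^{2}$, gives
\begin{equation*}
\left(\norm*{x_{u}^{k}-x_{v}^{k}}^{2}-d_{uv}^{2}\right)^{2}\le\norm*{x_{u}^{k}-x_{v}^{k}}^{4}+d_{uv}^{4}<6^{4}n^{4}+6^{4}=6^{4}(n^{4}+1).
\end{equation*}
Finally, since $G$ is simple, $|E|\le\binom{n}{2}\le n^{2}$, so summing this bound over all edges yields $g(h(k))<n^{2}\cdot 6^{4}(n^{4}+1)=6^{4}(n^{6}+n^{2})$. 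Combined with the lower bound, this shows $g(h(k))/\bigl(6^{4}(n^{6}+n^{2})\bigr)\in[0,1]$.

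The step I expect to be the crux is the uniform bound $\norm*{x_{u}^{k}-x_{v}^{k}}\le 6n$: it relies on the coordinate construction $(\ref{xi})$ preserving the bond lengths $d_{j,j+1}$ along every branch $k$ of the BP tree (so the triangle inequality can be chained) and on all the available distances being bounded by the constant $6$, which is exactly what makes $p_{1}=6^{4}(n^{6}+n^{2})$ a function of $n$ alone. The remaining manipulations are routine.
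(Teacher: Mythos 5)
Your proof is correct and follows essentially the same route as the paper's: bound each summand by $6^{4}(n^{4}+1)$ via $(a-b)^{2}\le a^{2}+b^{2}$ with $d_{uv}\le 6$ and $\norm*{x_{u}^{k}-x_{v}^{k}}\le 6n$, then multiply by the $O(n^{2})$ edge count. The only difference is cosmetic: you justify the chain-length bound $6n$ by chaining the triangle inequality along consecutive vertices, whereas the paper asserts that the maximum is attained at the fully extended configuration ($\theta_{i-2,i}=\pi$, $\omega_{i-3,i}=0$); your version is arguably the cleaner justification of the same intermediate estimate.
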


\begin{proof}
Let us define candidate solutions of a DMDGP instance $G=(V,E,d)$ by $%
x^{k}=\left(x_{1}^{k},...,x_{n}^{k}\right)\in \mathbb{R}^{3n}$, for $k=0,\dots,2^{n-3}-1$%
, which implies that%
\begin{equation*}
g(h(k))=\sum_{\{u,v\}\in E}\left(\norm*{x_{u}^{k}-x_{v}^{k}}^{2}-d_{uv}^{2}\right)^{2}.
\end{equation*}
We want to define $p_{1}$ such that, for $k=0,\dots,2^{n-3}-1$,%
\begin{equation*}
0\leq \sum_{\{u,v\}\in E}\left(\norm*{x_{u}^{k}-x_{v}^{k}}^{2}-d_{uv}^{2}\right)^{2}\leq p_{1}\text{.}
\end{equation*}
Since there are $O(n^{2})$ terms in this sum, if we find a value $s$ such
that%
\begin{equation*}
\underset{k=0,...,2^{n-3}-1}{\underset{\{u,v\}\in E}{\max}}%
\left\{\left(\norm*{x_{u}^{k}-x_{v}^{k}}^{2}-d_{uv}^{2}\right)^{2}\right\} \leq s,
\end{equation*}
we could define $p_{1}=n^{2}s$.

For $k=0,...,2^{n-3}-1$, we have
\begin{align*}
\underset{\{u,v\}\in E}{\max }\left\{\left(\norm*{x_{u}^{k}-x_{v}^{k}}^{2}-d_{uv}^{2}\right)^{2}\right\} &=
\underset{\{u,v\}\in E}{\max}\left\{\norm*{x_{u}^{k}-x_{v}^{k}}^{4}+d_{uv}^{4}-2\norm*{x_{u}^{k}-x_{v}^{k}}^{2}d_{uv}^{2}\right\}
\\
&\leq \underset{\{u,v\}\in E}{\max }\left\{\norm*{x_{u}^{k}-x_{v}^{k}}^{4}\right\}+%
\underset{\{u,v\}\in E}{\max }\left\{d_{uv}^{4}\right\}.
\end{align*}
The maximum value for $d_{uv}$ is related to the maximum value detected by
NMR experiments (given by $6\mathring{A}$)~\cite{donald11}, and the maximum
value for $\norm*{x_{u}^{k}-x_{v}^{k}}$ is obtained when $\theta _{i-2,i}=\pi $,
for $i=3,...,n$, and $\omega _{i-3,i}=0$, for for $i=4,...,n$. Considering
also $6\mathring{A}$ for $d_{i-1,i}$, for $i=2,...,n$ (which is greater than
the maximum value for a covalent bond in proteins~\cite{donald11}), we obtain%
\begin{equation*}
\underset{k=0,...,2^{n-3}-1}{\underset{\{u,v\}\in E}{\max }}%
\left\{\left(\norm*{x_{u}^{k}-x_{v}^{k}}^{2}-d_{uv}^{2}\right)^{2}\right\}\leq
(6n)^{4}+6^{4}=6^{4}(n^{4}+1).
\end{equation*}
Hence, for $s=6^{4}(n^{4}+1)$ and $p_{1}=n^{2}s$, we obtain%
\begin{equation*}
\frac{g(h(k))}{6^{4}(n^{6}+n^{2})}\in \lbrack 0,1].
\end{equation*}
\qed
\end{proof}

Now, we show that it is also possible to define another parameter $p_{2}$
such that, for all $k=0,...,2^{n-3}-1$ and a given $\epsilon > 0$,  
\begin{equation*}
\left( \frac{g(h(k))}{p_{1}}\right) ^{\frac{1}{p_{2}}}\in \lbrack
0,1-\epsilon ),
\end{equation*}%
when $k$ is associated to a DMDGP solution, and%
\begin{equation*}
\left( \frac{g(h(k))}{p_{1}}\right) ^{\frac{1}{p_{2}}}\in \lbrack 1-\epsilon,1],
\end{equation*}%
in the other cases.

From (\ref{G}), a DMDGP solution $x_{1}^{k},\dots,x_{n}^{k}$ is identified
when $g\left(x_{1}^{k},...,x_{n}^{k}\right)=0$. However, due to rounding errors in
fixed-point arithmetics, we define a DMDGP solution $x_{1}^{k},\dots,x_{n}^{k}$
if, and only if,%
\begin{equation*}
g\left(x_{1}^{k},\dots,x_{n}^{k}\right)<\delta ,
\end{equation*}%
for a given small value $\delta >0$.

\begin{theorem}\label{p2}
Given $\delta >0$, the function $g\circ h:\{0,...,2^{n-3}-1\}\rightarrow 
\mathbb{R}$, where $h$ and $g$ are defined by (\ref{H}) and (\ref{G}),
respectively, and a value $\epsilon >0$ such that $\delta <1-\epsilon $, we
have that%
\begin{equation*}
\left( \frac{g(h(k))}{6^{4}(n^{6}+n^{2})}\right) ^{\frac{1}{\log
_{1-\epsilon }\left( \frac{\delta }{6^{4}(n^{6}+n^{2})}\right) }}\in \lbrack
0,1-\epsilon )\text{,}
\end{equation*}
when $k$ is associated to a DMDGP solution ($g(h(k))<\delta $), and  
\begin{equation*}
\left( \frac{g(h(k))}{6^{4}(n^{6}+n^{2})}\right) ^{\frac{1}{\log
_{1-\epsilon }\left( \frac{\delta }{6^{4}(n^{6}+n^{2})}\right) }}\in \lbrack
1-\epsilon ,1],
\end{equation*}
when $k$ is not associated to a DMDGP solution ($g(h(k))\geq \delta$).
\end{theorem}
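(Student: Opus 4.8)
\medskip
\noindent\textbf{Proof proposal.}
The plan is to set $p_1=6^4(n^6+n^2)$ and $p_2=\log_{1-\epsilon}\!\bigl(\delta/p_1\bigr)$, and then to analyse the single real function $t\mapsto t^{1/p_2}$ on the interval $[0,1]$ evaluated at $t=g(h(k))/p_1$. By Proposition~\ref{p1} we already know $g(h(k))/p_1\in[0,1]$, so the whole statement reduces to this one-variable estimate.

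First I would verify that the exponent $1/p_2$ is a well-defined positive number. The hypotheses $\epsilon>0$, $\delta>0$ and $\delta<1-\epsilon$ force $0<1-\epsilon<1$, so $1-\epsilon$ is a legitimate logarithm base. Since $p_1=6^4(n^6+n^2)>1$, we have $0<\delta/p_1<1$, hence $\ln(\delta/p_1)<0$ and $\ln(1-\epsilon)<0$, so $p_2=\ln(\delta/p_1)/\ln(1-\epsilon)>0$; in fact $p_2>1$, because $\delta/p_1<\delta<1-\epsilon$ and $\log_{1-\epsilon}$ is decreasing, although only $p_2>0$ is actually needed. Positivity of $1/p_2$ makes $t\mapsto t^{1/p_2}$ strictly increasing on $[0,1]$ with range contained in $[0,1]$, which already secures membership in $[0,1]$ in both cases of the statement.

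Next I would pin down the threshold. By monotonicity, for $t\in[0,1]$ one has $t^{1/p_2}<1-\epsilon$ if and only if $t<(1-\epsilon)^{p_2}$, and the definition of $p_2$ as a base-$(1-\epsilon)$ logarithm gives $(1-\epsilon)^{p_2}=\delta/p_1$. Hence $t^{1/p_2}<1-\epsilon\iff t<\delta/p_1\iff g(h(k))<\delta$, which is the first claimed alternative; the complementary chain $t^{1/p_2}\ge 1-\epsilon\iff g(h(k))\ge\delta$ follows identically, and together with $t^{1/p_2}\le 1$ it yields the second alternative. The degenerate value $g(h(k))=0$ is consistent too, since then $t^{1/p_2}=0<1-\epsilon$ while $0<\delta$.

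I do not expect a genuine obstacle here: the only points that require care are recording the implicit conditions ($0<\epsilon<1$, $\delta>0$, $p_1>1$) that make $p_2$ positive --- positivity being exactly what lets the strict inequality pass through the power map without reversal --- and treating $t=0$ separately, since $0^{1/p_2}$ is not recovered from the identity $(1-\epsilon)^{p_2}=\delta/p_1$.
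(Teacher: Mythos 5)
Your proposal is correct and follows essentially the same route as the paper's proof: establish $p_2=\log_{1-\epsilon}(\delta/p_1)>0$ from the signs of the two logarithms, then push the inequalities $g(h(k))<\delta$ (resp. $\geq\delta$, combined with Proposition~\ref{p1} for the upper bound) through the increasing map $t\mapsto t^{1/p_2}$ using the identity $(\delta/p_1)^{1/p_2}=1-\epsilon$. Your packaging of the two cases as a single monotonicity equivalence, and your explicit treatment of $t=0$, are minor presentational refinements rather than a different argument.
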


\begin{proof}
First, note that, for $p_{1}=6^{4}(n^{6}+n^{2})$,%
\begin{align*}
\frac{\delta }{p_{1}} < \delta <1-\epsilon &\Rightarrow  \ln \left( \frac{\delta }{p_{1}}\right)  <0\text{ and }\ln (1-\epsilon )<0 \\
&\Rightarrow \frac{\ln \left( \frac{\delta }{p_{1}}\right) }{\ln (1-\epsilon )} =\log_{1-\epsilon }\left( \frac{\delta }{p_{1}}\right) >0.
\end{align*}
Defining%
\begin{equation*}
p_{2}=\log _{1-\epsilon }\left( \frac{\delta }{p_{1}}\right) 
\end{equation*}
and considering a DMDGP solution $k$ ($g(h(k))<\delta $), we obtain%
\begin{align*}
0 \leq g(h(k))<\delta &\Rightarrow 0\leq \frac{g(h(i_{0}))}{p_{1}}<\frac{ \delta }{p_{1}} \\
&\Rightarrow 0\leq \left( \frac{g(h(k))}{p_{1}}\right) ^{\frac{1}{p_2}}<\left( \frac{\delta }{p_{1}}\right) ^{\frac{1}{p_2}} \\
&\Rightarrow 0\leq \left( \frac{g(h(k))}{p_{1}}\right) ^{\frac{1}{p_2}}<\left( \frac{\delta }{p_{1}}\right) ^{\frac{1}{\log _{1-\epsilon }\left( \frac{\delta}{p_{1}}\right)}} \\
&\Rightarrow 0\leq \left( \frac{g(h(i_{0}))}{p_{1}}\right) ^{\frac{1}{p_2}}<1-\epsilon  \\
&\Rightarrow \left( \frac{g(h(k))}{p_{1}}\right) ^{\frac{1}{p_2}}\in \lbrack 0,1-\epsilon ).
\end{align*}

When $k$ is not a DMDGP solution ($g(h(k))\geq \delta $), we get, from Proposition \ref{p1},%
\begin{align*}
\delta  \leq g(h(k)) &\Rightarrow \frac{\delta }{p_{1}}\leq \frac{g(h(k))}{p_{1}}\leq 1 \\
&\Rightarrow \left( \frac{\delta }{p_{1}}\right) ^{\frac{1}{p_{2}}}\leq \left( \frac{g(h(k))}{p_{1}}\right) ^{\frac{1}{p_{2}}}\leq 1 \\
&\Rightarrow \left( \frac{\delta }{p_{1}}\right) ^{\frac{1}{\log_{1-\epsilon }\left( \frac{\delta }{p_{1}}\right) }}\leq \left( \frac{g(h(k))}{p_{1}}\right) ^{\frac{1}{p_{2}}}\leq 1 \\
&\Rightarrow 1-\epsilon \leq \left( \frac{g(h(k))}{p_{1}}\right) ^{\frac{1}{p_2}}\leq 1 \\
&\Rightarrow \left( \frac{g(h(k))}{p_{1}}\right) ^{\frac{1}{p_2}}\in \lbrack 1-\epsilon ,1].
\end{align*}
\qed
\end{proof}

\begin{corollary}
For a given $\delta >0$ and $\epsilon >0$ such that $\delta+\epsilon <1 $, the function%
\begin{equation*}
f:\left\{0,\dots,2^{n-3}-1\right\}\rightarrow \{0,1\},
\end{equation*}%
defined by%
\begin{equation*}
f(k)=1-\floor*{ \left( \frac{g(h(k))}{p_{1}}\right) ^{\frac{1}{p_{2}}%
}+\epsilon} ,
\end{equation*}%
where 
\begin{equation*}
p_{1}=6^{4}(n^{6}+n^{2})
\end{equation*}%
and%
\begin{equation*}
p_{2}=\log _{1-\epsilon }\left( \frac{\delta }{p_{1}}\right) ,
\end{equation*}%
satisfies the following property:%
\begin{equation*}
f(k)=1\Longleftrightarrow k\text{ is associated to a DMDGP solution (}%
g(h(k))<\delta \text{).}
\end{equation*}
\end{corollary}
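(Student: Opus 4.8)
The plan is to obtain the Corollary directly from Theorem~\ref{p2} by tracking what the floor function does on the two intervals that theorem supplies. Abbreviate $t(k)=\bigl(g(h(k))/p_{1}\bigr)^{1/p_{2}}$, with $p_{1}=6^{4}(n^{6}+n^{2})$ and $p_{2}=\log_{1-\epsilon}\!\bigl(\delta/p_{1}\bigr)$, so that $f(k)=1-\lfloor t(k)+\epsilon\rfloor$. First I would record the elementary consequences of the hypothesis $\delta+\epsilon<1$ with $\delta>0$: it forces $0<\epsilon<1$, and also $\delta<1-\epsilon$, which is exactly the condition under which Theorem~\ref{p2} is stated, so that theorem is applicable for this $\epsilon$.

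Next I would split into the two cases of the defining property. If $k$ is associated to a DMDGP solution, i.e.\ $g(h(k))<\delta$, then Theorem~\ref{p2} gives $t(k)\in[0,1-\epsilon)$, hence $t(k)+\epsilon\in[\epsilon,1)\subset[0,1)$, so $\lfloor t(k)+\epsilon\rfloor=0$ and $f(k)=1$. If instead $g(h(k))\geq\delta$, Theorem~\ref{p2} gives $t(k)\in[1-\epsilon,1]$, hence $t(k)+\epsilon\in[1,1+\epsilon]$; using $\epsilon<1$ from the first paragraph we have $1+\epsilon<2$, so $\lfloor t(k)+\epsilon\rfloor=1$ and $f(k)=0$.

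Finally I would assemble the equivalence: since $f$ takes only values in $\{0,1\}$, the two cases above establish both directions of $f(k)=1\Longleftrightarrow g(h(k))<\delta$, which is the claimed property; this also confirms that $f$ indeed maps into $\{0,1\}$ so that it is an admissible Grover oracle.

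The argument is short because Proposition~\ref{p1} and Theorem~\ref{p2} have already done the work; the only points that require care are bookkeeping ones. One must check that $\epsilon<1$ (so the upper endpoint $1+\epsilon$ stays strictly below $2$ and the floor is forced to be $1$ rather than $2$), and that the half-open/closed conventions in Theorem~\ref{p2} align with the rounding $\lfloor\,\cdot\,+\epsilon\rfloor$: the value $1-\epsilon$ must sit in the ``non-solution'' interval $[1-\epsilon,1]$ and be excluded from the ``solution'' interval $[0,1-\epsilon)$, which is precisely what makes $t(k)+\epsilon=1$ round to $1$ in the boundary case. No further estimates are needed.
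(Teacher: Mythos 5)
Your proposal is correct and follows essentially the same route as the paper: invoke Theorem~\ref{p2} to place $\bigl(g(h(k))/p_{1}\bigr)^{1/p_{2}}$ in $[0,1-\epsilon)$ or $[1-\epsilon,1]$ according to whether $k$ is a solution, and then read off the value of the floor. The only difference is that you make explicit the bookkeeping the paper leaves implicit (that $\epsilon<1$ forces $1+\epsilon<2$, and that the interval endpoints align with the rounding), which is a welcome but not substantively different addition.
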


\begin{proof}
From Theorem \ref{p2}, when $k\in \left\{0,...,2^{n-3}-1\right\}$ is not a DMDGP solution ($%
g(h(k))\geq \delta $),%
\begin{equation*}
\left( \frac{g(h(k))}{p_{1}}\right) ^{\frac{1}{p_2}}\in \lbrack 1-\epsilon
,1]\Longrightarrow f(k)=1-\floor*{ \left( \frac{g(h(k))}{p_{1}}\right) ^{%
\frac{1}{p_{2}}}+\epsilon} =0.
\end{equation*}
For a DMDGP solution $k$ ($g(h(k))<\delta $),%
\begin{equation*}
\left( \frac{g(h(k))}{p_{1}}\right) ^{\frac{1}{p_2}}\in \lbrack 0,1-\epsilon
)\Longrightarrow f(k)=1-\floor*{\left( \frac{g(h(k))}{p_{1}}\right) ^{\frac{%
1}{p_{2}}}+\epsilon} =1.
\end{equation*}
\qed
\end{proof}

\section{Computational results}

In this section, we describe the implementation of some instances of the oracle function for the DMDGP on IBM quantum computers. Since the quantum computers that are available nowadays have a high error rate, we restrict to 3-qubit oracles to show that our procedure works and can be extended to meaningful applications when error rates are improved or error correcting codes are implemented.

In our experiments, we have used IBM Qiskit to implement Grover's algorithm. We hard coded oracle circuits for objective functions corresponding to each possible instance of the DMDGP for molecules with \textbf{7} atoms. We ran the experiments on IBM Santiago, IBM Lagos, IBM Bogota, and on IBM QASM Simulator.


In Fig.~\ref{fig:santiago-vs-simulation}, we have a summary of the results obtained from the search algorithm running on IBM Santiago---which was the best device for most of our experiments---when the searched element was $\ket{010}$. The horizontal axis depicts the possible outcomes, and the vertical axis depicts the probability of each outcome based on 8196 independent runs performed on the device. We compare the results obtained using the standard set of gates and the improved decomposition of multi-controlled $Z$ gates. The results for an ideal noiseless device was obtained by simulations and is also shown in the figure for reference. The left panel shows the results for one oracle call, and the right panel shows the results for two oracle calls. See also Table~\ref{tab:data-santiago-vs-simulation} for the numerical data used to generate the plots. The searched element was successfully marked with either set of gates, even though the multi-controlled $Z$ gates improved the results as expected. The simulated results confirm that we should apply two oracle calls in an ideal noiseless set-up. However, the experiments on real quantum devices clearly show that the second oracle call introduces noise and makes the results worse. Therefore, when executing Grover's algorithm on noisy quantum devices we must stop before the full number of iterations prescribed by theory, even though the success probability is not yet as high as we could expect from error-corrected devices.

\begin{figure}[h]
    \centering
    \includegraphics[width=0.49\textwidth]{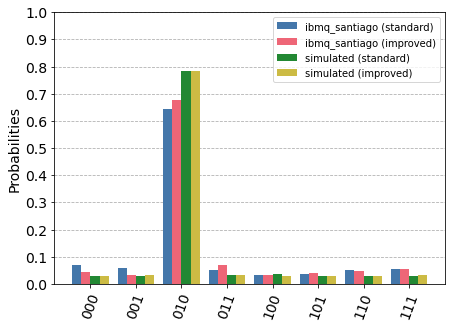}~
    \includegraphics[width=0.49\textwidth]{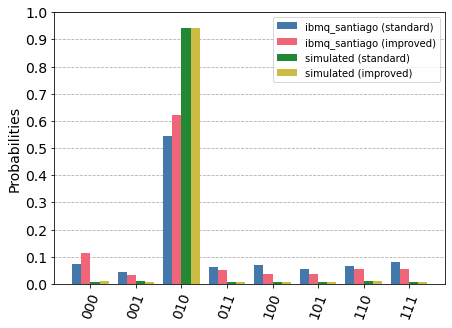}
    \caption{Results of the search algorithm on IBM Santiago using the standard set of gates and the improved set of gates. The searched element is $\ket{010}$. The results for an ideal noiseless device was obtained by simulations. (Left) One oracle call. (Right) Two oracle calls.}
    \label{fig:santiago-vs-simulation}
\end{figure}

\begin{table}[]
\centering
\begin{tabular}{@{}lllllllll@{}}
\toprule
Ket         & \multicolumn{4}{l}{One call}                                     & \multicolumn{4}{l}{Two calls}                                    \\ \midrule
            & \multicolumn{2}{l}{IBM Santiago} & \multicolumn{2}{l}{Simulator} & \multicolumn{2}{l}{IBM Santiago} & \multicolumn{2}{l}{Simulator} \\
            & std        & impr       & std      & impr      & std        & impr       & std      & impr      \\
$\ket{000}$ & 0.070           & 0.043          & 0.030         & 0.029         & 0.075           & 0.115          & 0.007         & 0.010         \\
$\ket{001}$ & 0.058           & 0.031          & 0.030         & 0.031         & 0.045           & 0.035          & 0.010         & 0.008         \\
$\ket{010}$ & 0.644           & 0.679          & 0.783         & 0.785         & 0.545           & 0.620          & 0.941         & 0.943         \\
$\ket{011}$ & 0.053           & 0.069          & 0.032         & 0.031         & 0.061           & 0.050          & 0.008         & 0.008         \\
$\ket{100}$ & 0.032           & 0.032          & 0.037         & 0.031         & 0.069           & 0.035          & 0.008         & 0.007         \\
$\ket{101}$ & 0.036           & 0.041          & 0.030         & 0.030         & 0.055           & 0.036          & 0.008         & 0.008         \\
$\ket{110}$ & 0.053           & 0.049          & 0.030         & 0.030         & 0.067           & 0.054          & 0.009         & 0.009         \\
$\ket{111}$ & 0.054           & 0.056          & 0.028         & 0.032         & 0.082           & 0.054          & 0.009         & 0.007         \\ \bottomrule
\end{tabular}
\caption{Numerical data used to generate Fig.~\ref{fig:santiago-vs-simulation}.}
\label{tab:data-santiago-vs-simulation}
\end{table}

In Fig.~\ref{fig:all-devices}, we have a summary of the results obtained from the search algorithm running on IBM Lagos, IBM Santiago, and IBM Bogota when the searched element was $\ket{010}$. The horizontal axis depicts the possible outcomes, and the vertical axis depicts the probability of each outcome based on 8196 independent runs performed on the device. We also compare the results obtained using the standard set of gates and the improved decomposition of multi-controlled $Z$ gates. The left panel shows the results for one oracle call, and the right panel shows the results for two oracle calls. See also Tables~\ref{tab:data-all-devices-1} and \ref{tab:data-all-devices-2} for the numerical data used to generate the plots---the former corresponding to one oracle call, and the latter corresponding to two oracle calls. Notice that all quantum devices were successful in finding the searched element, even though some of them performed much better than others. When choosing a NISQ computer \cite{Pre18} to solve a real-world problem, it is important to analyse each device in depth, since the quality of the results strongly depends on that choice. It is also important to notice that the second oracle call decreased the success probability on all quantum devices. When programming a NISQ computer, one must keep in mind that adaptations on the algorithms are often necessary in order to reduce the size and depth of the final circuit and compensate the effects of noise and decoherence.

\begin{figure}[h]
    \centering
    \includegraphics[width=0.49\textwidth]{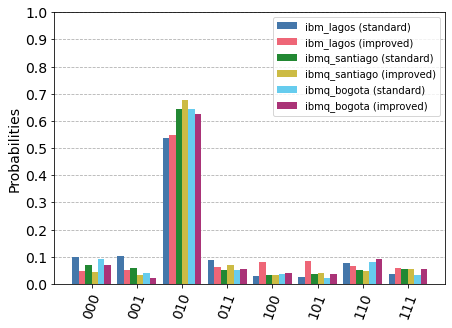}~
    \includegraphics[width=0.49\textwidth]{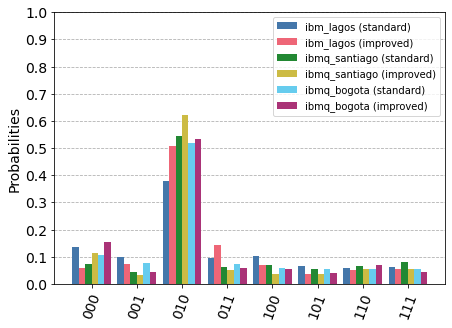}
    \caption{Results of the search algorithm on IBM Lagos, IBM Santiago and IBM Bogota using the standard set of gates and the improved set of gates. The searched element is $\ket{010}$. (Left) One oracle call. (Right) Two oracle calls.}
    \label{fig:all-devices}
\end{figure}

\begin{table}[]
\centering
\begin{tabular}{@{}lllllll@{}}
\toprule
Ket         & \multicolumn{2}{l}{IBM Lagos} & \multicolumn{2}{l}{IBM Santiago} & \multicolumn{2}{l}{IBM Bogota} \\ \midrule
            & std           & impr          & std             & impr           & std            & impr          \\
$\ket{000}$ & 0.098         & 0.048         & 0.070           & 0.043          & 0.090          & 0.069         \\
$\ket{001}$ & 0.105         & 0.051         & 0.058           & 0.031          & 0.042          & 0.024         \\
$\ket{010}$ & 0.538         & 0.547         & 0.644           & 0.679          & 0.643          & 0.627         \\
$\ket{011}$ & 0.089         & 0.062         & 0.053           & 0.069          & 0.051          & 0.057         \\
$\ket{100}$ & 0.030         & 0.082         & 0.032           & 0.032          & 0.038          & 0.039         \\
$\ket{101}$ & 0.026         & 0.084         & 0.036           & 0.041          & 0.023          & 0.037         \\
$\ket{110}$ & 0.078         & 0.066         & 0.053           & 0.049          & 0.080          & 0.093         \\
$\ket{111}$ & 0.036         & 0.060         & 0.054           & 0.056          & 0.033          & 0.054         \\ \bottomrule
\end{tabular}
\caption{Numerical data used to generate the left panel of Fig.~\ref{fig:all-devices}, corresponding to one oracle call.}
\label{tab:data-all-devices-1}
\end{table}

\begin{table}[]
\centering
\begin{tabular}{@{}lllllll@{}}
\toprule
Ket         & \multicolumn{2}{l}{IBM Lagos} & \multicolumn{2}{l}{IBM Santiago} & \multicolumn{2}{l}{IBM Bogota} \\ \midrule
            & std           & impr          & std             & impr           & std            & impr          \\
$\ket{000}$ & 0.136         & 0.058         & 0.075           & 0.115          & 0.106          & 0.154         \\
$\ket{001}$ & 0.099         & 0.074         & 0.045           & 0.035          & 0.076          & 0.045         \\
$\ket{010}$ & 0.377         & 0.506         & 0.545           & 0.620          & 0.517          & 0.533         \\
$\ket{011}$ & 0.095         & 0.145         & 0.061           & 0.050          & 0.075          & 0.058         \\
$\ket{100}$ & 0.103         & 0.071         & 0.069           & 0.035          & 0.059          & 0.054         \\
$\ket{101}$ & 0.068         & 0.038         & 0.055           & 0.036          & 0.055          & 0.041         \\
$\ket{110}$ & 0.060         & 0.053         & 0.067           & 0.054          & 0.057          & 0.071         \\
$\ket{111}$ & 0.063         & 0.054         & 0.082           & 0.054          & 0.055          & 0.044         \\ \bottomrule
\end{tabular}
\caption{Numerical data used to generate the right panel of Fig.~\ref{fig:all-devices}, corresponding to two oracle calls.}
\label{tab:data-all-devices-2}
\end{table}

In order to assess the quality of the results it is important to measure the fidelity between the actual probability distribution obtained by the real quantum device and the ideal distribution obtained by the simulator.  If $\{p_x\}$ and $\{q_x\}$ are two probability distributions, then the total variation distance $d$ is given by~\cite{LevinPeresWilmer2006}
\[d = \frac{1}{2}\sum_x \abs*{p_x - q_x},\]
and the Hellinger distance $h$ is given by~\cite{LevinPeresWilmer2006}
\[h^2 = \frac{1}{2}\sum_x \left(\sqrt{p_x} - \sqrt{q_x}\right)^2.  \]
In this work, we used two different definitions of fidelity, both $1-d$ and $1-h$.

It is also useful to consider the selectivity, which is a parameter introduced by Wang et al.~\cite{WangKrstic} to measure how strong is the result of a quantum computation when compared to the second most probable outcome. In this work, we adopt the simplified definition from Zhang et al.~\cite{Zhang2021Korepin}, given by the ratio
\[ S = \frac{P_s}{P_{ns}}, \]
where $P_s$ is the success probability, i.e., the probability of measuring the searched element, and $P_{ns}$ is the maximum probability of measuring a non-searched element.

In Table~\ref{tab:one-call}, we summarize the quality of the results obtained with one oracle call on the quantum computers IBM Lagos, IBM Santiago, and IBM Bogota. The parameters used to assess the quality of the results were the fidelity, the selectivity, and the success probability. The values on the tables are averages taken from all experiments. This analysis is important because the quality of the results vary according to the marked element. Notice that the improved decomposition of multi-controlled $Z$ gates really helped us to achieve better results.

\begin{table}[h]
\begin{tabular}{@{}lllllll@{}}
\toprule
                    & \multicolumn{2}{l}{Lagos} & \multicolumn{2}{l}{Santiago} & \multicolumn{2}{l}{Bogota} \\ \midrule
                    & standard    & improved    & standard      & improved     & standard     & improved    \\
Fidelity ($1-d$)    & 0.789       & 0.810       & 0.815         & 0.815        & 0.804        & 0.796       \\
Fidelity ($1-h$)    & 0.825       & 0.838       & 0.849         & 0.849        & 0.831        & 0.828       \\
Selectivity         & 6.432       & 6.776       & 7.599         & 7.386        & 5.912        & 6.359       \\
Success probability & 0.578       & 0.599       & 0.598         & 0.599        & 0.593        & 0.583       \\ \bottomrule
\end{tabular}
\caption{Quality of the results obtained with one oracle call assessed according to different metrics.}
\label{tab:one-call}
\end{table}

In Table~\ref{tab:two-calls}, we summarize the quality of the results obtained with two oracle calls on the quantum computers IBM Lagos, IBM Santiago, and IBM Bogota. The parameters used to assess the quality of the results were the fidelity, the selectivity, and the success probability. Notice that applying two oracle calls is not a good strategy on noisy devices.

\begin{table}[h]
\begin{tabular}{@{}lllllll@{}}
\toprule
                    & \multicolumn{2}{l}{Lagos} & \multicolumn{2}{l}{Santiago} & \multicolumn{2}{l}{Bogota} \\ \midrule
                    & standard    & improved    & standard      & improved     & standard     & improved    \\
Fidelity ($1-d$)    & 0.527       & 0.553       & 0.580         & 0.615        & 0.548        & 0.601       \\
Fidelity ($1-h$)    & 0.593       & 0.610       & 0.632         & 0.654        & 0.609        & 0.644       \\
Selectivity         & 3.952       & 4.197       & 5.644         & 5.671        & 4.693        & 5.646       \\
Success probability & 0.473       & 0.498       & 0.526         & 0.560        & 0.493        & 0.546       \\ \bottomrule
\end{tabular}
\caption{Quality of the results obtained with two oracle calls assessed according to different metrics.}
\label{tab:two-calls}
\end{table}

\section{Final comments}

We have described implementations of a quantum algorithm that solves small instances of the Discretizable Molecular Distance Geometry Problem (DMDGP). The strategy is to use Grover's algorithm to search the domain of an oracle function for a point that satisfies all the information required to solve the DMDGP. It is not trivial to build the oracle and this is one of the main contribution of this work. The implementation of our algorithm on small-scale quantum computers shows the consistency of the procedure, which is more efficient than classical methods that employ brute force, because the number of steps of the Grover algorithm is quadratically smaller. We have shown probability distributions of actual implementations on IBM quantum computers and have compared with correct results simulated on classical computers. We remark that it is better take fewer steps than the number prescribed by Grover's algorithm because the accumulation of errors would degrade the quality of the results.

Due to uncertainties in NMR data, the natural extension of this work is the
development of a quantum approach to the DMDGP with interval distances,
which is one of the important open problems in Distance Geometry~\cite{liberti18}. There is an active research on this topic, like the interval BP
algorithm~\cite{dambrosio17,goncalves17,lavor13,souza11,souza13,worley18}
and other approaches based on Clifford algebra~\cite{alves10,lavor18,lavor19}.
\section*{Acknowledgments}

We acknowledge financial support of the Brazilian research agencies CNPq, FAPESP, and FAPERJ.

\end{document}